\documentclass[a4paper]{paper}
\usepackage[english]{babel}
\usepackage[applemac]{inputenc}
\usepackage{bbm,epigraph}
\usepackage{amsfonts}
\usepackage{amsthm,comment}
\usepackage{color}
\usepackage{graphicx,MnSymbol}
\usepackage[all,cmtip]{xy}
\usepackage{multicol}
\usepackage{tikz}
\usetikzlibrary{arrows}
\usetikzlibrary{calc}
\usetikzlibrary{decorations.markings}
 \usepackage{tikz-cd}




\newcommand{\inp}{	\,\ensuremath{\raise-.5ex\hbox{$\invbackneg$}}} 
\newcommand{\qsp}[2]{\,\ensuremath{\raise.5ex\hbox{$#1$}\big\slash\raise-.5ex\hbox{$#2$}}}

\DeclareMathOperator{\coker}{coker}

\DeclareMathOperator{\val}{val}

\newcommand{\CC}{\mathbb{C}}

\newcommand{\beq}{\begin{equation*}}
\newcommand{\qeb}{\end{equation*}}

\newcommand{\bep}{\begin{prop}}
\newcommand{\peb}{\end{prop}}

\newcommand{\ev}[1]{\textbf{\textsf{#1}}}

\renewcommand{\ker}{\mathsf{ker}\,}

\textwidth6in

\setlength{\topmargin}{0in} \addtolength{\topmargin}{-\headheight}
\addtolength{\topmargin}{-\headsep}

\setlength{\oddsidemargin}{0in}

\oddsidemargin  0.0in \evensidemargin 0.0in \parindent0em

\theoremstyle{definition}

\newtheorem{theorem}{\textsf{Theorem}}

\newtheorem{definizione}{\textsf{Definition}}[section]

\newtheorem{proposition}{\textsf{Proposition}}[section]
\newtheorem{lemma}{\ev{Lemma}}[section]

\newenvironment{definition}{\begin{definizione}}{\newline \phantom{.}\hfill $\maltese$\end{definizione}}

\newtheoremstyle{Rem}
{}
{}
{}
{}
{\itshape}
{:}
{\newline}
{}
\theoremstyle{Rem}
\newtheorem*{Remark}{\textsf{Remark}}

\newenvironment{remark}{\begin{Remark}}{\newline \phantom{.} \hfill $\pentagram$ \end{Remark}}

\title{The Graph Laplacian and Morse Inequalities}
\author{Ivan Contreras\thanks{Department of Mathematics, University of Illinois, Urbana-Champaign, USA. icontrer@illinois.edu.}  - Boyan Xu \thanks{Department of Mathematics, University of Illinois, Urbana-Champaign, USA. borisxu2@illinois.edu.}}
\date{}

\begin{document}
\maketitle

\abstract{{ The objective of this note is to provide an interpretation of the discrete version of Morse inequalities, following Witten's approach via supersymmetric quantum mechanics \cite{W}, adapted to finite graphs, as a particular instance of Morse-Witten theory for cell complexes \cite{FormanCW}. We describe the general framework of graph quantum mechanics and we produce discrete versions of the Hodge theorems and energy cut-offs within this formulation. }}
\keywords{ Graph Laplacian, Morse-Witten complex, graph Hodge theory, discrete Morse functions.}

\tableofcontents

\section*{Introduction}
The understanding of physical phenomena, as well as the behavior of information in networks have been recently studied from the combinatorial and algebraic perspective.\\

In this paper we consider a  toy version of quantum mechanics \cite{delVeccio, Mnev2} based on a graph-theoretic analogue of the Schr\"odinger equation. To a finite graph we associate a \textit{partition function}, a discretization of the Feynman path integral which can be used to count special types of paths on graphs and compute topological invariants. It relies on the discretized version of the Laplace operator $$\Delta= \nabla^2= \sum_{i=1}^{n}\frac{\partial ^2}{\partial x_i^2},$$ which depends on the combinatorics of the graph.  We apply this framework to Morse theory, and we are able to recover Morse inequalities, by following Witten's viewpoint of critical points of Morse functions.\\

In \cite{FormanCW}, Forman provided a combinatorial interpretation via cell complexes of Witten's approach of Morse inequalities. He showed that, given a Morse function $g$ on a CW-complex $X$, there is always an equivalent Morse function $g$ that is self-indexing flat, i.e. its value is given by the dimension of the corresponding cell.  It turns out that, by considering the particular case of finite graphs, the computations of the deformed Laplacian and Morse-Witten complex are explicit, regardless whether the function is self-indexing or not.  In particular we deduce that the height function on trees (Section \ref{boundary}) gives rise to the correct Morse complex, after the deformation procedure.\\

The main idea can be summarized as follows: after introducing the supersymmetric version of quantum mechanics of graphs in terms of the graph Laplacian, we  describe a version of Morse theory on graphs first by ordering the set of edges and vertices of $\Gamma$ by declaring each vertex lesser than each edge of which it is an endpoint. With respect to this ordering, a \textit{discrete Morse function} is a real-valued function $f$ on the set of edges and vertices of $\Gamma$ such that for all $\sigma\in\Gamma$,
\begin{align*} \#\{\tau>\sigma | f(\tau) \leq f(\sigma)\} \leq 1 \\
\#\{\tau<\sigma | f(\tau) \leq f(\sigma)\} \leq 1.
\end{align*}

This means that a Morse function changes its value when there is a change in the dimension of the cells (0-cells being the vertices and 1-cells being the edges).
A \textit{critical point} $\sigma$ of $f$ is one for which the two sets above are empty.\\

The Morse inequalities state that Betti numbers $h_0$ and $h_1$ are bounded by the number of critical vertices and critical edges respectively. Using Theorem \ref{hodge-energy},  we arrive to these inequalities, drawing inspiration from work of Witten \cite{W}. The idea is as follows. Deform the supersymmetric Laplacian $\Delta$ using the Morse function $f$ with real parameter $s$ by taking boundary operator $d_s = \exp(fs)d\exp(-fs)$ and coboundary $d_s^* = \exp(fs)d^*\exp(-fs)$. The Hodge theorems still hold for the deformed Laplacian $$\Delta_s = d_s^*d_s + d_sd_s^*,$$ and after taking a limit $s\rightarrow \infty$, there is an energy level $a$ for which the cutoff complex $C^\bullet_a$ approaches the Morse complex as $s$ approaches infinity.\\

This combinatorial approach, based on the linear algebraic properties of the deformed Laplacian and incidence matrices, gives an intuitive interpretation of Witten's proof of Morse inequalities. In our description, the analytical issues of explicitly describing the spectra of deformed Laplacian operators (for which Witten requires to approximate the operators around the critical points by using the Morse coordinates) do not exist, since the operators are finite dimensional.

\section{Graph quantum mechanics}
In this section we introduce the combinatorial version of quantum mechanics for finite graphs. 
\subsection{The graph Laplacian}
For the purpose of this paper we consider finite graphs $\Gamma=(V, E)$, i.e. a finite set $V$ or vertices and a finite set $E$ of edges $e=(v_i, v_j)$. We will distinguish between unoriented and oriented graphs when necessary. 

\begin{figure}[h]
\begin{center}
	\begin{tikzpicture}
    \tikzset{
    mid arrow/.style={
    decoration={markings,mark=at position 0.5 with {\arrow[scale = 2]{>}}},
    postaction={decorate},
    shorten >=0.4pt}}
	\path (0,0) coordinate (X4); \fill (X4) circle (3pt);
	\path (0,1) coordinate (X1); \fill (X1) circle (3pt);
	\path (1,1) coordinate (X2); \fill (X2) circle (3pt);
	\path (1,0) coordinate (X5); \fill (X5) circle (3pt);
    \path (2,0) coordinate (X6); \fill (X6) circle (3pt);
    \path (2,1) coordinate (X3); \fill (X3) circle (3pt);
	\node[above left] at (X1) {$v_{1}$}; 
	\node[above right] at (X2) {$v_{2}$}; 
	\node[above right] at (X3) {$v_{3}$};
	\node[below left] at (X4) {$v_{4}$};
    \node[below] at (X5) {$v_{5}$};
    \node[below right] at (X6) {$v_{6}$};
	\draw (X1) -- (X2) node[midway,above]{$e_{1}$};
    \draw (X1) -- (X4) node[midway,left]{$e_{2}$};
    \draw (X2) -- (X4) node[midway,right]{$e_{3}$};
	\draw (X4) -- (X5) node[midway,below]{$e_{4}$};
    \draw (X5) -- (X6) node[midway,below]{$e_{5}$};
    \draw (X6) -- (X3) node[midway,right]{$e_{6}$};
    \draw (X3) -- (X5) node[midway,left]{$e_{7}$};
	\end{tikzpicture}
    \end{center}
    \caption{A connected unoriented graph, with two independent closed paths}
\label{Figure 1}\end{figure}
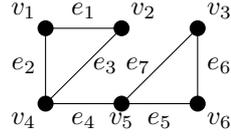

\begin{definition}
The adjacency matrix $A_{\Gamma}$ (or simply $A$) of the graph $\Gamma$ is
$$
A_{\Gamma}(i,j) = \left\{ \begin{array}{rl}
 1 &\mbox{ if $(v_i,v_j) \in E$} \\
  0 &\mbox{ otherwise}
       \end{array} \right.
$$
\end{definition} 

\begin{definition}
The valence matrix $val_{\Gamma}$ of the graph $\Gamma$ is the diagonal matrix such that the entry $(i,i)$ is the number of neighbors of the vertex $v_i$.
\end{definition}

\begin{definition}\label{even}
The even graph Laplacian $\Delta_{+, \Gamma}$ is defined by 
\begin{equation}
\Delta_{+, \Gamma}= val_{\Gamma}-A_{\Gamma}
\end{equation}
\end{definition}

\subsection{Orientation on graphs}
In order to define the incidence matrix of $\Gamma$, we choose an orientation, that is, a particular order of the pairs $(v_i,v_j)$.
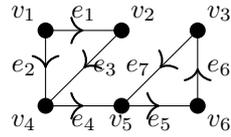
\begin{figure}[h]
\begin{center}
	\begin{tikzpicture}
    \tikzset{
    mid arrow/.style={
    decoration={markings,mark=at position 0.5 with {\arrow[scale = 2]{>}}},
    postaction={decorate},
    shorten >=0.4pt}}
	\path (0,0) coordinate (X4); \fill (X4) circle (3pt);
	\path (0,1) coordinate (X1); \fill (X1) circle (3pt);
	\path (1,1) coordinate (X2); \fill (X2) circle (3pt);
	\path (1,0) coordinate (X5); \fill (X5) circle (3pt);
    \path (2,0) coordinate (X6); \fill (X6) circle (3pt);
    \path (2,1) coordinate (X3); \fill (X3) circle (3pt);
	\node[above left] at (X1) {$v_{1}$}; 
	\node[above right] at (X2) {$v_{2}$}; 
	\node[above right] at (X3) {$v_{3}$};
	\node[below left] at (X4) {$v_{4}$};
    \node[below] at (X5) {$v_{5}$};
    \node[below right] at (X6) {$v_{6}$};
	\draw[mid arrow] (X1) -- (X2) node[midway,above]{$e_{1}$};
    \draw[mid arrow] (X1) -- (X4) node[midway,left]{$e_{2}$};
    \draw[mid arrow] (X2) -- (X4) node[midway,right]{$e_{3}$};
	\draw[mid arrow] (X4) -- (X5) node[midway,below]{$e_{4}$};
    \draw[mid arrow] (X5) -- (X6) node[midway,below]{$e_{5}$};
    \draw[mid arrow] (X6) -- (X3) node[midway,right]{$e_{6}$};
    \draw[mid arrow] (X3) -- (X5) node[midway,left]{$e_{7}$};
	\end{tikzpicture}
    \end{center}
    \caption{An oriented labeled graph}
\label{Figure 2}\end{figure}
\begin{definition}

Let $\Gamma$ be an oriented graph. The incidence matrix $I_{\Gamma}$ (or simply $I$) is a $\vert V \vert \times \vert E \vert$-matrix defined by

$$
I_{\Gamma}(k,l) = \left\{ \begin{array}{rl}
 -1 &\mbox{ if $e_l$ starts at $v_k$} \\
 1 &\mbox{if $e_l$ ends at $v_k$ } \\
  0 &\mbox{ otherwise}
       \end{array} \right.
$$
\end{definition} 
The following relationship between the even Laplacian and the incidence matrix follows form the combinatorial description of the even Laplacian.

\begin{proposition} \label{Lap}
The even graph Laplacian $\Delta_{+,\Gamma}$ can be written in terms of the incidence matrix, as follows:
$\Delta_{+,\Gamma}=I_{\Gamma} I^{*}_{\Gamma},$
\end{proposition}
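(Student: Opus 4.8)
The plan is to prove the identity entrywise, comparing the $(i,j)$ entry of $I_\Gamma I^*_\Gamma$ with that of $val_\Gamma - A_\Gamma$ for every pair of vertices $v_i, v_j$. Since $I_\Gamma$ has real entries, $I^*_\Gamma$ is just its transpose, so the product unpacks as $(I_\Gamma I^*_\Gamma)(i,j) = \sum_l I_\Gamma(i,l)\, I_\Gamma(j,l)$, where the sum runs over all edges $e_l$. The guiding observation is that a single edge $e_l$ contributes a nonzero term to this sum only when both $v_i$ and $v_j$ are endpoints of $e_l$; everything then reduces to bookkeeping the signs imposed by the chosen orientation.

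First I would treat the diagonal. When $i = j$, each summand is $I_\Gamma(i,l)^2$, which equals $1$ exactly when $v_i$ is an endpoint of $e_l$ (regardless of whether $e_l$ starts or ends there) and $0$ otherwise. Summing over edges therefore counts the edges incident to $v_i$, which for a simple graph equals the number of neighbors of $v_i$, i.e. the diagonal entry $val_\Gamma(i,i)$.

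Next I would handle the off-diagonal case $i \neq j$. Here $I_\Gamma(i,l)\, I_\Gamma(j,l)$ is nonzero only for an edge $e_l$ joining $v_i$ and $v_j$, of which a simple graph has at most one. If such an edge exists, the orientation makes one endpoint the start (contributing $-1$) and the other the end (contributing $+1$), so their product is $-1$; otherwise the sum is empty and the entry is $0$. This is precisely $-A_\Gamma(i,j)$. Combining both cases yields $I_\Gamma I^*_\Gamma = val_\Gamma - A_\Gamma$, which equals $\Delta_{+,\Gamma}$ by Definition \ref{even}.

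The calculation itself is routine, so the point worth stressing—and the only real subtlety—is that the outcome is independent of the orientation used to define $I_\Gamma$: reversing an edge negates the corresponding column of $I_\Gamma$, which leaves each product $I_\Gamma(i,l)\, I_\Gamma(j,l)$ unchanged and hence does not affect $I_\Gamma I^*_\Gamma$. I would also flag the standing assumption that $\Gamma$ is simple, with no loops and no parallel edges, since a loop would corrupt the diagonal degree count and parallel edges would push off-diagonal entries past $-1$, in either case breaking the identity as stated.
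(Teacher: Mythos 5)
Your proof is correct and is exactly the entrywise combinatorial verification that the paper invokes when it asserts the identity ``follows from the combinatorial description of the even Laplacian'' (the paper gives no further details). Your added remarks on orientation-independence and the simple-graph assumption are sound and consistent with the paper's setting.
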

Note that from Definition \ref{even} it follows that the even Laplacian is independent of the orientation.
Based on proposition \ref{Lap}, we have the following definition of the odd Laplacian, which can be regarded as an operator on functions on edges.
\begin{definition}
The odd Laplacian $\Delta_{-, \Gamma}$ is defined by 
\begin{equation}
\Delta_{-,\Gamma}=I^{*}_{\Gamma} I_{\Gamma}.
\end{equation}
\end{definition}

\subsection{The state evolution}

\begin{definition}
An even quantum state $\Psi_{+}$ on a graph $\Gamma$ is a complex-valued function on the vertices $\Gamma_0$, that is, $\Psi_{+} \in \mathbb C^{\vert V \vert}$. Similarly, an odd state $\Psi_{-}$ is an element of $\mathbb C^{\vert E \vert}$.
\end{definition}
The quantum theory is given by the Schr\"{o}dinger equation
\beq \frac{\partial}{\partial t}\Psi_{+,t} = -\Delta_{+} \Psi_{+,t} 
\qeb which is solved by
\beq \exp(-\Delta_{+} t) \Psi_{+,0}
\qeb
We denote $Z(t) = \exp(-\Delta_{+, t})$, the (even) \textit{partition function} of $\Gamma$. Indeed, if $\Gamma$ is regular, then
\beq Z(t)(i, j) = \exp(tA)\exp(-\val t)(i, j) = \sum_{\gamma: i \rightarrow j} \frac{t^{|\gamma|}}{|\gamma|!}e^{-\val t},
\qeb
which is an integral over a space of paths with measure $\frac{t^{|\gamma|}}{|\gamma|!}$ and integrand $e^{-\val t}$. The ``action'' on a path is the sum of the valences over the vertices it traverses. $Z(t)$ is therefore a discretization of the Feynman path integral. Furthermore,
\beq \frac{d^k Z}{dt^k}\Big|_0(i, j)
\qeb
gives a signed count of \textit{generalized walks} \cite{Yu} of length $k$ from vertex $i$ to $j$: a sequence $$(v_1, e_1), (v_2, e_2)..., (v_k, e_k)$$ of pairs of vertices and edges in which $v_j$ and $v_{j+1}$ are endpoints to $e_j$ (not necessarily distinct) for all $j$. In other words, a new path is a sequence of vertices which may traverse an edge while remaining stationary at a vertex.  The precise combinatorial interpretation of the partition function can be found in \cite{delVeccio, Mnev2, Yu}. The sign of such a path is determined by the number of $j$ such that $v_{j+1}\neq v_j$.

\subsection{The supersymmetric version}

  In the supersymmetric theory, we extend $\Delta$ to the entire simplicial cochain complex $C^\bullet$ of $\Gamma$ with $\Delta := \Delta_{+} \oplus \Delta_{-}$, where $\Delta_{-}$ and $\Delta_{+}$ operate on edges and vertices respectively. The entries of ${\Delta-}^k$ give signed counts of another type of special path: sequences of edges $e_1, e_2, ... , e_k$ such that $e_j$ is adjacent to $e_{j+1}$, with the sign determined by the number of $j$ such that $e_j$ meets $e_{j+1}$ with opposite orientation. For further details see, e.g. \cite{delVeccio, Yu}.\\
  \subsubsection{Graph Hodge Theory}
There is a close relationship between the graph Laplacian and the topology of the graph. More precisely,  we have the following relationship between $\Delta$ and the cohomology groups of $\Gamma$

\begin{proposition}\label{evenk}
The dimension of $\ker(\Delta_{+})$ is the number of connected components of $\Gamma$.
\end{proposition}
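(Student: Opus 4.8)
The plan is to exploit the factorization $\Delta_{+,\Gamma} = I_\Gamma I^*_\Gamma$ furnished by Proposition \ref{Lap}, which reduces the computation of the kernel to a purely combinatorial condition on functions on the vertex set. First I would observe that, for an operator of the form $I I^*$, one has $\ker(I_\Gamma I^*_\Gamma) = \ker(I^*_\Gamma)$. One inclusion is immediate: if $I^*_\Gamma \Psi = 0$ then $I_\Gamma I^*_\Gamma \Psi = 0$. For the converse, if $I_\Gamma I^*_\Gamma \Psi = 0$, pairing with $\Psi$ under the Hermitian inner product and using the adjoint relation gives $\langle I^*_\Gamma \Psi, I^*_\Gamma \Psi \rangle = \langle \Psi, I_\Gamma I^*_\Gamma \Psi \rangle = 0$, so that $I^*_\Gamma \Psi = 0$. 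Hence $\ker(\Delta_{+,\Gamma}) = \ker(I^*_\Gamma)$.

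Next I would unwind the definition of the incidence matrix to describe this kernel explicitly. For an even state $\Psi \in \mathbb C^{|V|}$ and an edge $e_l$ oriented from $v_a$ to $v_b$, the sign conventions $I_\Gamma(a,l) = -1$ and $I_\Gamma(b,l) = 1$ give that the $l$-th coordinate of $I^*_\Gamma \Psi$ equals $\Psi(v_b) - \Psi(v_a)$. Therefore $\Psi \in \ker(I^*_\Gamma)$ if and only if $\Psi(v_a) = \Psi(v_b)$ for every edge $(v_a, v_b)$ of $\Gamma$; that is, $\Psi$ takes equal values on any two vertices joined by an edge.

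The final step is to translate this edge-wise condition into a statement about connected components. Since any two vertices in the same component are joined by a path, the constancy along each edge propagates, by induction along the path, to the conclusion that $\Psi$ is constant on each connected component; conversely, any function that is constant on each component clearly lies in $\ker(I^*_\Gamma)$. Thus $\ker(\Delta_{+,\Gamma})$ is precisely the space of functions constant on each connected component, which admits as a basis the indicator functions of the components. Its dimension is therefore exactly the number of connected components of $\Gamma$.

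As for obstacles, there are none of an analytic nature, since every operator here is finite-dimensional. The only points needing a little care are the identification $\ker(I_\Gamma I^*_\Gamma) = \ker(I^*_\Gamma)$, which rests on the non-degeneracy (positive semidefiniteness) expressed through the Hermitian inner product, and the propagation of the constancy condition from individual edges to whole components, which is a routine induction along paths. I expect the reader's only genuine verification to be the sign computation $(I^*_\Gamma \Psi)_l = \Psi(v_b) - \Psi(v_a)$, which must be checked against the chosen orientation conventions for $I_\Gamma$.
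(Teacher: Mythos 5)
Your proof is correct and takes essentially the same route as the paper: the paper invokes its Appendix Lemma \ref{kernel} (whose inner-product argument you reproduce inline) to identify $\ker(\Delta_{+})$ with the kernel of the adjoint incidence operator, and then uses indicator functions of connected components to describe that kernel. If anything, your write-up is more complete, since the paper only exhibits the indicator functions as generators without verifying the converse inclusion that every element of $\ker(I^*)$ is constant on each component, a step you carry out by induction along paths.
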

\begin{proof}
By Lemma \ref{kernel} in Appendix A, $\ker(\Delta_+)=\ker (I)$. Now, if the vertices $\{v_{\alpha_1}, v_{\alpha_2}, \cdots, v_{\alpha_k} \}$ are all the elements of a connected component of $\Gamma$, then the state 
$$
\Psi_{+}(v) = \left\{ \begin{array}{rl}
 1 &\mbox{ if $v=v_{\alpha_j}, 1\leq j \leq k$} \\
  0 &\mbox{ otherwise}
       \end{array} \right.
$$
is a generator of $\ker(I)$.
\end{proof}

\begin{proposition}
The dimension of $\ker(\Delta_{-})$ is the number of independent cycles of $\Gamma$. \end{proposition}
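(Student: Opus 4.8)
The plan is to run the argument in exact parallel with Proposition~\ref{evenk}, only transposing every incidence matrix. First I would reduce $\ker(\Delta_{-})$ to a kernel of $I$ itself. Since $\Delta_{-}=I^{*}I$, the linear-algebra fact behind Lemma~\ref{kernel} applies verbatim: if $I^{*}I\phi=0$ then $\langle I^{*}I\phi,\phi\rangle=\|I\phi\|^{2}=0$, so $I\phi=0$, and the reverse inclusion is trivial. Hence $\ker(\Delta_{-})=\ker(I)$ as a subspace of $\mathbb{C}^{|E|}$, and the whole problem becomes the computation of $\dim\ker(I)$.

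Next I would read the definition of the incidence matrix as saying that $I$ is the simplicial boundary operator $\partial_{1}\colon C_{1}\to C_{0}$: for an edge $1$-chain $\phi=\sum_{l}\phi_{l}e_{l}$, the $v_{k}$-component of $I\phi$ is $\sum_{e_{l}\text{ ends at }v_{k}}\phi_{l}-\sum_{e_{l}\text{ starts at }v_{k}}\phi_{l}$, i.e.\ the net flow through $v_{k}$. Thus $\ker(I)=\ker\partial_{1}=Z_{1}(\Gamma)$ is exactly the space of $1$-cycles, and --- there being no $2$-cells --- this coincides with $H_{1}(\Gamma)$; a basis of it is what ``independent cycles'' refers to.

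To exhibit such a basis explicitly, in the spirit of the indicator-of-component generators used for $\ker(\Delta_{+})$, I would fix a spanning forest $F$ of $\Gamma$. Each edge $e\notin F$ closes up, together with the unique $F$-path between its endpoints, to a fundamental cycle $\gamma_{e}\in\ker(I)$; these are independent because each $\gamma_{e}$ contains an edge belonging to no other, and they span $Z_{1}(\Gamma)$. Since a spanning forest of a graph with $c$ components uses $|V|-c$ edges, there remain $|E|-(|V|-c)$ non-forest edges, giving $\dim\ker(\Delta_{-})=|E|-|V|+c$, the cyclomatic number and hence the number of independent cycles. Equivalently, one can bypass the explicit basis and invoke Proposition~\ref{evenk} through rank--nullity: $\mathrm{rank}(I)=\mathrm{rank}(I^{*})=|V|-\dim\ker(I^{*})=|V|-c$, so $\dim\ker(I)=|E|-\mathrm{rank}(I)=|E|-|V|+c$.

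The step I expect to be the real obstacle --- everything above it being formal --- is the claim that the fundamental cycles are simultaneously independent and exhaustive, equivalently that $\mathrm{rank}(I)=|V|-c$. This is where the combinatorics of spanning forests genuinely enters, and it is the one place where I would either argue carefully with the tree structure or else lean on the companion count $\dim\ker(I^{*})=c$ already established in Proposition~\ref{evenk}.
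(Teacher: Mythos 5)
Your proposal is correct, and its opening move is the same as the paper's: apply Lemma \ref{kernel} to $\Delta_{-}=I^{*}I$ to reduce $\ker(\Delta_{-})$ to a kernel of the incidence matrix. (In fact you apply the lemma correctly: with $A=I$ it gives $\ker(\Delta_{-})=\ker(I)$, the edge functions killed by the boundary operator; the paper's proof writes $\ker(I^{*})$ at this step, which is a notational slip --- the Theorem stated immediately afterwards records $H^{1}(\Gamma)=\ker(I)=\ker(\Delta_{-})$, matching your version.) After this shared step the two arguments genuinely diverge. The paper stays informal: it reads a kernel element as a current obeying Kirchhoff's first law and asserts that balanced currents are exactly those supported on closed paths; no basis is exhibited and no dimension is computed. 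You instead identify $\ker(I)$ with the cycle space $Z_{1}(\Gamma)=H_{1}(\Gamma)$ and actually count its dimension, either via the fundamental cycles of a spanning forest or via rank--nullity combined with Proposition \ref{evenk}, namely $\mathrm{rank}(I)=\mathrm{rank}(I^{*})=|V|-c$, hence $\dim\ker(I)=|E|-|V|+c$. The rank--nullity route is the cleaner of your two options, since it converts the one step you rightly flag as delicate (that the fundamental cycles both span and are independent) into the already-established count $\dim\ker(I^{*})=c$, and it makes the phrase ``number of independent cycles'' precise as the cyclomatic number $|E|-|V|+c$. In short, your proof buys rigor and an explicit dimension formula where the paper offers only the physical picture; the paper's Kirchhoff formulation, in turn, supplies the interpretation that justifies calling kernel elements ``cycles'' in the first place.
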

\begin{proof}
Once again, by Lemma \ref{kernel}, $\ker (\Delta_{-})=\ker (I^*)$. Now, the combinatorial interpretation of the elements of the kernel of $I^*$ is in terms of closed paths and this can be interpreted in terms of closed currents.  More precisely, one may think as a state $\Psi_{-}$ as a current, i.e. a function taking values on edges,  obeying  
Kirchhof's  first law: at each vertex, in-
and outgoing currents balance. Thus, a current achieves balance if and only if the current is assigned to a closed path.
\end{proof}

These two propositions lead to the following 
\begin{theorem}
The cohomology groups of $\Gamma$ can be calculated as
\beq
H^0(\Gamma, \CC) = \ker(I^*) = \ker(\Delta_{+})
\qeb
and
\beq H^1(\Gamma, \CC) = \ker(I) = \ker(\Delta_{-}).
\qeb
\end{theorem}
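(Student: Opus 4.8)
The statement is exactly the finite-dimensional Hodge theorem for the (cellular $=$ simplicial) cochain complex of $\Gamma$, so the plan is to deduce it from the linear algebra already assembled in Proposition \ref{Lap} and the two preceding propositions. Viewing $\Gamma$ as a one-dimensional complex, its cochain complex has only two nonzero terms, $C^0 = \CC^{\vert V\vert}$ and $C^1 = \CC^{\vert E\vert}$, with coboundary $d = I^*_\Gamma \colon C^0 \to C^1$ and boundary (adjoint) $d^* = I_\Gamma \colon C^1 \to C^0$. Under this identification Proposition \ref{Lap} and the definition of the odd Laplacian say precisely that $\Delta_+ = d^* d = I_\Gamma I^*_\Gamma$ and $\Delta_- = d d^* = I^*_\Gamma I_\Gamma$ are the degree-$0$ and degree-$1$ Hodge Laplacians.

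First I would dispose of degree zero. Because the complex vanishes below degree $0$, the zeroth cohomology is a kernel on the nose: $H^0(\Gamma,\CC) = \ker(d) = \ker(I^*_\Gamma)$. Invoking Lemma \ref{kernel} in the form $\ker(A A^*) = \ker(A^*)$, with $A = I_\Gamma$, turns this into $\ker(I^*_\Gamma) = \ker(I_\Gamma I^*_\Gamma) = \ker(\Delta_+)$, which is the first asserted chain of equalities. Concretely these harmonic $0$-cochains are the vertex functions that are constant on each connected component, exactly the generators written down in the proof of Proposition \ref{evenk}.

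For degree one the first cohomology is a cokernel, $H^1(\Gamma,\CC) = C^1/\im(d) = \CC^{\vert E\vert}/\im(I^*_\Gamma)$, and here the essential step is the orthogonal (Hodge) decomposition $C^1 = \im(I^*_\Gamma)\oplus\ker(I_\Gamma)$, which is the fundamental theorem of linear algebra in the guise $\im(I^*_\Gamma) = (\ker I_\Gamma)^{\perp}$. This singles out $\ker(I_\Gamma)$ as a canonical space of harmonic representatives for the quotient, giving a natural isomorphism $H^1(\Gamma,\CC)\cong\ker(I_\Gamma)$, and Lemma \ref{kernel} applied to $\Delta_- = I^*_\Gamma I_\Gamma$ identifies $\ker(I_\Gamma) = \ker(\Delta_-)$, completing the second chain.

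I expect the only genuine obstacle to be this degree-one identification: whereas $H^0$ is literally a kernel, $H^1$ is a cokernel, and converting it into $\ker(\Delta_-)$ is not a formal manipulation but requires the orthogonal splitting above (equivalently, the fact that every class has a unique harmonic representative). The remaining points are bookkeeping: one must fix, once and for all, that $I^*_\Gamma$ is the coboundary and $I_\Gamma$ the boundary so that $\Delta_+$ and $\Delta_-$ land in the correct degrees, and one should note that for a graph the two-term complex above is already the cellular cochain complex, so the cohomology it computes is $H^\bullet(\Gamma,\CC)$ with no further comparison needed.
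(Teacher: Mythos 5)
Your proof is correct, but it takes a genuinely different route from the paper's. Both arguments share the final step --- applying Lemma \ref{kernel} to convert $\ker(I^*)$ and $\ker(I)$ into $\ker(\Delta_+)$ and $\ker(\Delta_-)$ --- but they differ in how the cohomology groups are identified with those incidence-matrix kernels in the first place. The paper treats the theorem as a corollary of the two preceding propositions, whose proofs are combinatorial: $\ker(I^*)$ is spanned by indicator functions of connected components, and $\ker(I)$ consists of Kirchhoff-type currents, which balance at every vertex precisely when they are supported on closed paths; the link to $H^0$ and $H^1$ then rests on the standard combinatorial description of the Betti numbers of a graph (number of components, number of independent cycles), and in degree one it is really a dimension count rather than an identification of subspaces. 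You instead compute cohomology directly from the two-term complex $0 \to \CC^{\vert V\vert} \to \CC^{\vert E\vert} \to 0$: in degree zero $H^0$ is literally $\ker(I^*)$, and in degree one you convert the cokernel $\CC^{\vert E\vert}/\im(I^*)$ into $\ker(I)$ via the orthogonal splitting $\CC^{\vert E\vert} = \im(I^*) \oplus \ker(I)$, i.e.\ $\im(I^*) = \ker(I)^{\perp}$. What the paper's route buys is the combinatorial meaning (components, cycles, currents) that motivates the rest of the article; what your route buys is self-containedness and rigor at exactly the point the paper glosses over --- the fact that $H^1$ is a cokernel, not a kernel, and needs harmonic representatives --- plus generality, since your argument works verbatim for the cochain complex of any finite cell complex, with no graph-specific input.
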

Thus $\Delta$ satisfies a ``discrete'' Hodge theorem.

We can further simplify the calculation of cohomology by considering energy cut-offs. Since $\Delta$ is symmetric, it is diagonalizable and its eigenvalues are real (non-negative, in fact), so the cochain complex decomposes as

\beq C^\bullet = \bigoplus_{\lambda\geq 0}E_\lambda
\qeb
where $E_\lambda$ is the eigenspace of $\Delta$ corresponding to $\lambda$. Given an energy $a\geq 0$, let
\beq C_a^\bullet = \bigoplus_{\lambda \leq a} E_\lambda
\qeb
be the sub-cochain complex of $C^\bullet$ consisting of eigenspaces of energy lower than $a$.

\begin{theorem}\label{hodge-energy}[Energy cut-off]
$H^*(C_a^\bullet) = H^*(C^\bullet)$
\end{theorem}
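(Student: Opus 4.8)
The plan is to exploit the fact that, because everything here is finite-dimensional, the Hodge-theoretic picture is purely linear-algebraic: the eigenspace decomposition $C^\bullet = \bigoplus_{\lambda \geq 0} E_\lambda$ is not merely a decomposition of vector spaces but a decomposition of \emph{cochain complexes}. Once this is established, the whole statement reduces to showing that every eigenspace with $\lambda > 0$ is acyclic, so that passing to the cut-off $C_a^\bullet$ discards only contractible summands.

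First I would verify that $d$ and $d^*$ commute with $\Delta$. This is immediate from $d^2 = 0$: writing $\Delta = dd^* + d^*d$ gives $\Delta d = dd^*d = d\Delta$, and symmetrically $\Delta d^* = d^*\Delta$. Hence if $\Delta v = \lambda v$ then $\Delta(dv) = \lambda(dv)$, so $d$ maps $E_\lambda$ into itself, and likewise for $d^*$. Moreover $\Delta = \Delta_+ \oplus \Delta_-$ is block-diagonal for the grading $C^\bullet = C^0 \oplus C^1$, so each $E_\lambda$ splits as $E_\lambda = (E_\lambda \cap C^0) \oplus (E_\lambda \cap C^1)$ and is therefore a genuine graded subcomplex. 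Consequently $C^\bullet = \bigoplus_\lambda E_\lambda^\bullet$ and $C_a^\bullet = \bigoplus_{\lambda \leq a} E_\lambda^\bullet$ are direct sums of subcomplexes, and since cohomology is additive over finite direct sums, $H^*(C^\bullet) = \bigoplus_\lambda H^*(E_\lambda^\bullet)$ and $H^*(C_a^\bullet) = \bigoplus_{\lambda \leq a} H^*(E_\lambda^\bullet)$.

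The key step is the acyclicity of the positive eigenspaces: for $\lambda > 0$ I claim $H^*(E_\lambda^\bullet) = 0$. The operator $h := \tfrac{1}{\lambda}d^*$ maps $E_\lambda$ to itself, since $d^*$ preserves eigenspaces, and it satisfies $dh + hd = \tfrac{1}{\lambda}(dd^* + d^*d) = \tfrac{1}{\lambda}\Delta = \id$ on $E_\lambda$, because $\Delta$ acts there as $\lambda\cdot\id$. Thus $h$ is a contracting homotopy: every cocycle $v \in E_\lambda$ obeys $v = d(hv) + h(dv) = d(hv)$, so it is a coboundary, and $E_\lambda^\bullet$ is exact in every degree (in degree $0$ this forces $v = 0$, since $hv$ lies in the zero group below).

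Finally I would assemble the pieces. The previous step annihilates every summand with $\lambda > 0$, so both sums collapse to the $\lambda = 0$ term: $H^*(C^\bullet) = H^*(E_0^\bullet)$, and because the hypothesis $a \geq 0$ guarantees $E_0 \subseteq C_a^\bullet$, also $H^*(C_a^\bullet) = H^*(E_0^\bullet)$, whence $H^*(C_a^\bullet) = H^*(C^\bullet)$. As a sanity check, on $E_0 = \ker\Delta$ one has $0 = \langle \Delta v, v\rangle = \|dv\|^2 + \|d^* v\|^2$, so $d$ vanishes and $H^*(E_0^\bullet) = E_0$ recovers the harmonic space of the preceding Hodge theorem. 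I do not expect a genuine obstacle beyond the bookkeeping of the grading and the commutation relations; the point, as emphasized in the introduction, is exactly that finite-dimensionality removes the spectral-analytic difficulties confronting Witten's continuous argument.
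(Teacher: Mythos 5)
Your proof is correct, but it takes a genuinely different route from the paper's. The paper's own proof is a terse, degree-by-degree verification: since $\ker \Delta_+\subseteq C_a^0$ for every $a$, it identifies $\ker I^*\vert_{C_a^0}=\ker\Delta_+=H^0(\Gamma)$, and for $H^1$ it appeals to the containment of $\coker(I^*)$ in $C_a^1$ together with Lemma \ref{kernel} ($\ker A=\ker A^*A$). You instead prove the stronger structural statement that the spectral decomposition $C^\bullet=\bigoplus_\lambda E_\lambda$ is a decomposition of \emph{cochain complexes} (because $d$ and $d^*$ commute with $\Delta$ and respect the grading), and then kill every positive eigenspace with the explicit contracting homotopy $h=\tfrac{1}{\lambda}d^*$, which is exactly the finite-dimensional Green's operator. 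What your approach buys: it makes explicit two points the paper leaves implicit — first, that $C_a^\bullet$ is actually closed under the differential (the paper calls it a ``sub-cochain complex'' without checking this), and second, why the low-energy part of the image of the coboundary together with the harmonic space exhausts $C_a^1$, which is the content hidden in the paper's loose phrase that ``$\coker(I^*)$ is contained in $C_a^1$'' (a cokernel is a quotient, not a subspace, so that phrase only acquires meaning once one passes to harmonic, i.e.\ zero-eigenvalue, representatives — precisely what your acyclicity argument supplies). Your argument also generalizes verbatim to complexes of arbitrary length, hence to the CW-complex setting of Forman toward which the paper gestures. What the paper's approach buys is brevity: it leans on the previously established identifications $H^0=\ker\Delta_+$, $H^1=\ker\Delta_-$ and the appendix lemmas, at the cost of leaving these gaps to the reader.
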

\begin{proof}
It is clear that $C_a^0$ contains $\ker \Delta_{+}$, for all $a$, thus $\ker I^*\vert _{C_a^0}=\ker \Delta_{+}=H^0(\Gamma)$. For $H^1$, it follows from the fact that $\coker (I^*)$ is contained in $C_a^1$, for all $a$, and from Lemma \ref{kernel}.
\end{proof}

\begin{remark}
In other words, the cohomology of $\Gamma$ can be calculated by considering subcomplexes of lower energy!
\end{remark}

\section{Morse theory}
In this section we describe a version of Morse theory on graphs by ordering the set of edges and vertices of $\Gamma$ by declaring each vertex lesser than each edge of which it is an endpoint. With respect to this ordering, we consider special states which change value (at least in one direction) when there is a change of the dimension (from 0 to 1 and vice-versa). 
More precisely we have the following definition, originally due to Forman \cite{Forman}:
\begin{definition}\label{Morsef}
a \textit{discrete Morse function} is a real-valued function $f$ on the set of edges and vertices of $\Gamma$ such that for all $\sigma\in\Gamma$,
\begin{eqnarray} 
\#\{\tau>\sigma | f(\tau) \leq f(\sigma)\} \leq 1 \label{MC1} \\
\#\{\tau<\sigma | f(\tau) \geq f(\sigma)\} \leq 1 \label{MC2}
\end{eqnarray}
\end{definition}

\begin{definition}
A \textit{critical cell} (vertex or edge) $\sigma$  of $f$ is one for which the two sets above are empty. We denote by $c_0(f)$ the number of critical vertices and by $c_1(f)$ the number of critical edges.
\end{definition}

The following lemma follows directly from definition \ref{Morsef}  (see e.g. \cite{Forman}, Lemma 2.4), and it will be useful when we will describe discrete gradient fields.
\begin{lemma}
If $\Gamma$ is a finite graph with discrete Morse function $f$, for every vertex or edge $\sigma$ either one of the following conditions hold:
\begin{enumerate}
\item[i.] \[ \#\{\tau>\sigma | f(\tau) \leq f(\sigma)\} =0,\] 
\item[ii.] \[\#\{\tau<\sigma | f(\tau) \geq f(\sigma)\} =0.\]

\end{enumerate}
\end{lemma}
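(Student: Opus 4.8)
The plan is to argue purely from dimension, splitting into the two possible types of cell. In a finite graph every $\sigma\in\Gamma$ is either a vertex or an edge, so the ordering introduced in Section~2 has only two levels: each vertex sits strictly below the edges of which it is an endpoint, and there is no comparability within a single dimension. The key observation is that, for either type of cell, one of the two sets appearing in the statement is automatically empty for \emph{dimensional} reasons alone, before the hypothesis that $f$ is a discrete Morse function is ever used; the Morse conditions (MC1)--(MC2) of Definition~\ref{Morsef} are not actually needed for this particular lemma.

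First I would take $\sigma$ to be a vertex. Since vertices are minimal in the ordering, there is no cell $\tau<\sigma$, so the set $\{\tau<\sigma \mid f(\tau)\geq f(\sigma)\}$ is empty and condition (ii) holds. Next I would take $\sigma$ to be an edge. Since a graph carries no cells of dimension greater than one, edges are maximal, the set $\{\tau>\sigma\}$ is empty, and hence $\#\{\tau>\sigma \mid f(\tau)\leq f(\sigma)\}=0$, i.e.\ condition (i) holds. Because these two cases exhaust all cells of $\Gamma$, at least one of (i) and (ii) holds for every $\sigma$, which is exactly the assertion.

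The only point worth flagging---and the reason this is genuinely a lemma rather than a tautology---is that in Forman's general framework for cell complexes a cell of intermediate dimension can possess both faces and cofaces, and there the conclusion requires the regularity (``diamond'') property of the complex: given a chain $\nu<\sigma<\tau$ with $\dim\tau=\dim\sigma+1$ and $\dim\nu=\dim\sigma-1$, one produces a second cell $\sigma'\neq\sigma$ with $\nu<\sigma'<\tau$ and plays the two Morse conditions of Definition~\ref{Morsef} against one another to reach a contradiction. I expect that mechanism to carry the real weight in the general statement, but for graphs it collapses entirely: no cell is simultaneously a coface of something and a face of something else, so no chain $\nu<\sigma<\tau$ exists, and the elementary dimensional case split above suffices with no further obstacle to overcome.
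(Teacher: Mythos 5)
Your proof is correct, and it is worth noting that it does more than the paper does: the paper supplies no argument at all for this lemma, saying only that it ``follows directly'' from Definition \ref{Morsef} and citing Forman's Lemma 2.4 (\cite{Forman}), which is the general statement for CW complexes. Your argument is the graph-specific observation that makes that citation unnecessary: in a one-dimensional complex the partial order has exactly two levels, so a vertex has no cells below it (making the set in (ii) empty) and an edge has no cells above it (making the set in (i) empty); in particular the conclusion holds for \emph{any} real-valued function on the cells, whether or not it satisfies the Morse conditions (\ref{MC1}) and (\ref{MC2}). This is genuinely more elementary than the route the paper points to: Forman's proof of Lemma 2.4 needs the regularity (diamond) property of the complex to produce the second cell $\sigma'$ with $\nu < \sigma' < \tau$ and then plays the two Morse conditions against each other --- exactly the mechanism you flag, and which, as you correctly observe, has no footing in a graph because no chain $\nu < \sigma < \tau$ exists. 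What the citation buys the paper is uniformity: the same lemma holds verbatim for the higher-dimensional complexes alluded to in its conclusion, where the dimensional collapse fails and Forman's argument carries the real weight. What your proof buys is self-containedness, plus the sharper insight that for graphs the lemma is a statement about the poset structure alone rather than about discrete Morse functions.
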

As we have said before, our main goal is to prove the following theorem.

\begin{theorem}\label{Morse} (Graph Morse inequalities)
Let $h_0$ and $h_1$ be the Betti numbers of $\Gamma$. Then $h_0 \leq c_0(f)$ and $h_1 \leq c_1(f)$, for every Morse function $f$.
\end{theorem}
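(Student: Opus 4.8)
The plan is to follow Witten's deformation strategy adapted to the finite-dimensional setting, using the energy cut-off theorem (Theorem \ref{hodge-energy}) as the crucial organizing tool. First I would deform the boundary and coboundary operators by setting $d_s = \exp(fs)\,d\,\exp(-fs)$ and $d_s^* = \exp(fs)\,d^*\,\exp(-fs)$, form the deformed Laplacian $\Delta_s = d_s^* d_s + d_s d_s^*$, and observe that since $d_s$ is conjugate to $d$ the cohomology is unchanged; consequently $\dim\ker\Delta_{s,+} = h_0$ and $\dim\ker\Delta_{s,-} = h_1$ for every $s$, by the Hodge theorems established above. The key point is that these dimensions are \emph{independent} of $s$, so I am free to study the spectrum in the limit $s\to\infty$ without altering the Betti numbers.

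Next I would fix an energy threshold $a \geq 0$ and invoke Theorem \ref{hodge-energy}, which tells us that the cohomology of the low-energy cut-off complex $C_a^\bullet$ agrees with $H^*(C^\bullet)$, hence with $H^*(\Gamma)$. The heart of the argument is then the following spectral claim: as $s\to\infty$, the number of eigenvalues of $\Delta_{s,+}$ lying below the threshold $a$ is bounded above by the number of critical vertices $c_0(f)$, and likewise the number of low-lying eigenvalues of $\Delta_{s,-}$ is bounded by the number of critical edges $c_1(f)$. I would establish this by writing the deformed Laplacian explicitly in the ordered basis of vertices and edges and extracting its leading behavior in $s$: the conjugation by $\exp(\pm fs)$ rescales the off-diagonal incidence entries by factors $\exp\big(s(f(\tau)-f(\sigma))\big)$ governed by the Morse function, while the Morse conditions (\ref{MC1}) and (\ref{MC2}) control precisely how many such entries can grow versus decay. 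Cells that are \emph{not} critical each contribute an eigenvalue that diverges to $+\infty$ as $s\to\infty$, so for $s$ large enough only the critical cells can possibly carry eigenvalues below $a$.

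Putting these together gives the inequalities: since the low-energy dimension of $\Delta_{s,+}$ equals $h_0$ (by Hodge theory, independent of $s$) and is at most the count of non-divergent eigenvalues, which is bounded by $c_0(f)$, we obtain $h_0 \leq c_0(f)$; the identical reasoning for $\Delta_{s,-}$ yields $h_1 \leq c_1(f)$.

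The main obstacle I anticipate is the spectral estimate in the second paragraph: making rigorous the statement that each non-critical cell forces an eigenvalue to diverge while the critical cells leave a low-lying subspace of the right dimension. In Witten's original analytic approach this requires a delicate harmonic-oscillator approximation near each critical point; the advantage of the finite graph is that $\Delta_s$ is an honest finite matrix, so the estimate reduces to tracking the dominant powers of $\exp(s)$ in its entries and applying a min-max or perturbation argument. The care needed is in handling the pairing between a non-critical cell and its unique Morse partner (the pair $\sigma < \tau$ witnessing the failure of criticality), which together span a two-dimensional block whose nonzero eigenvalue blows up; verifying that these blocks decouple from the critical cells to leading order in $s$ is the technically decisive step.
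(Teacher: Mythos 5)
Your strategy coincides with the paper's own proof: deform $d$ and $d^*$ by conjugation with $\exp(\pm fs)$, use invariance of cohomology under conjugation (Lemma \ref{conjugacy}) together with the energy cut-off of Theorem \ref{hodge-energy}, and bound the dimension of the low-energy subspace for large $s$ by the number of critical cells. However, two steps fail as written, and both sit exactly at what you call the ``technically decisive step.'' First, your coboundary deformation $d_s^*=\exp(fs)\,d^*\exp(-fs)$ (it repeats a typo from the paper's introduction) is \emph{not} the adjoint of $d_s=\exp(fs)\,d\,\exp(-fs)$. Taken literally, the exponentials cancel inside the Laplacian: writing $F_0$, $F_1$ for the diagonal matrices of $f$ on vertices and edges, one gets $d_sd_s^*=e^{sF_0}(dd^*)e^{-sF_0}$ and $d_s^*d_s=e^{sF_1}(d^*d)e^{-sF_1}$, so $\Delta_s$ is similar to the undeformed $\Delta$, its spectrum is independent of $s$, nothing diverges, and your limiting argument collapses to the tautology $h_0\le h_0$. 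You must take the genuine adjoint $d_s^*=\exp(-fs)\,d^*\exp(fs)$, as in the paper's proof; that is what keeps $\Delta_s$ self-adjoint (so Hodge theory and the cut-off argument apply to it) and what makes its entries actually move with $s$.

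Second, your key spectral claim --- that \emph{every} non-critical cell contributes an eigenvalue diverging to $+\infty$ --- is false precisely in the ``flat'' case, where $f$ takes equal values on a non-critical pair $v<e$; this is the case in every example of the paper. There the $(v,e)$ entry of $d_s$ is $\pm e^{s(f(v)-f(e))}=\pm 1$ for all $s$, and the eigenvalue carried by the pair tends to a finite nonzero limit: for $K_2$ with $f(v_1)=f(e)=1$, $f(v_2)=0$, one computes $\Delta_{+,s}\to\mathrm{diag}(1,0)$ and $\Delta_{-,s}\to[1]$, so the non-critical vertex and edge carry eigenvalue $1$, not $\infty$. Divergence occurs only when $f$ strictly decreases from a vertex to its paired edge. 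What your argument needs, and what is true, is the weaker statement that eigenvalues attached to non-critical cells stay bounded away from $0$ (their limits are $\ge 1$, possibly $+\infty$), so that for fixed $a<1$ and $s$ large, $\dim C^i_{s,a}$ is at most the number of critical $i$-cells. This is how the paper argues in Proposition \ref{limit}: it computes the entrywise limit $\Delta_{\pm,\infty}$ and counts its zero columns, rather than invoking divergence. (The paper's proposition has the complementary blind spot: it tacitly assumes the flat case, in which no entry blows up.) Once these two points are repaired, your proof is the paper's proof.
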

\begin{proof}
The strategy is as follows. By Using Theorem \ref{hodge-energy} above, we follow Witten's approach to Morse inequalities for Riemannian manifolds \cite{W}, via deformation of the supersymmetric Laplacian.  The precise idea is as follows. We deform the supersymmetric Laplacian $\Delta$ using the Morse function $f$ with a real parameter $s$ by  deforming the boundary and coboundary operators $d$ and $d^*$.  The Hodge theorems still hold for the deformed Laplacian $\Delta_s$, and, taking a limit $s\rightarrow \infty$, there is an energy level $a$ for which the cutoff complex $C^\bullet_a$ approaches the Morse complex as $s$ approaches infinity.\\
Now, let us start by deforming the boundary operator. 
\begin{definition} The deformed boundary and coboundary operators $d_s$ and $d_s^*$ are given by   
$$d_s = \exp(fs)d\exp(-fs), \, d_s^* = \exp(-fs)d^*\exp(fs).$$ 
\end{definition}
\begin{remark} Note that the $\exp{(fs)}$ and $\exp{(-fs)}$ are represented by matrices of, a priori, different dimensions. For the deformed boundary operator, $\exp(-fs)$ is a diagonal $\vert E \vert \times \vert E \vert $-real matrix, whereas $\exp(fs)$ is a diagonal $\vert V \vert \times \vert V \vert $-real matrix. In the coboundary case, the situation is reversed.
\end{remark}
\begin{definition} The deformed Laplacian $\Delta_s$ is defined by
$\Delta_s = d_s^*d_s + d_sd_s^*$
\end{definition}
Therefore we can define the deformed co-chain complex $C_s^{\bullet}$ as:
\begin{equation}\label{seq2}
\begin{tikzcd}
    0 \arrow{r}& \mathbb C^{\vert V \vert}\arrow{r}{d_s^*}&\mathbb C^{\vert E \vert }\arrow{r}&0.
\end{tikzcd}
\end{equation}
Similarly we can define the cut-off cochain complexes $C_{s,a}^{\bullet}$.

If we denote by $H^{\bullet}_s(\Gamma)$ the cohomology of the cochain $C_{s,a}^{\bullet}$, the following proposition follows from lemma \ref{conjugacy} in Appendix A, since the matrices $\exp(-fs)$ and $\exp(fs)$ are both invertible.
\begin{proposition} [deformed Energy cut-off]
\[H^{\bullet}_{s,a}(\Gamma)=H^{\bullet}(\Gamma).\]
\end{proposition}

Now, if we take the limit $s \to \infty$ the matrices $\Delta_{\pm, \infty}$ become quite simple, and their kernels become independent on $s$, only they only depend on the critical cells. Explicitly we have the following description:

\begin{proposition} \label{limit}
The matrices $\Delta_{+,\infty}$ and $\Delta_{-,\infty}$  with entries 0 and 1, and the number of zero columns is the number of corresponding critical cells.
\end{proposition}
\proof{ The general entries of the deformed boundary operator have the form $\exp(ks)$, where $k$ is the value of the Morse function on the respective cell. Thus, the graph Laplacian will have an entry of the form $\exp(qs)$, with $q$ nonzero, if and only the Morse value of a cell and its incident cell is different. Thus when $s\to \infty$ , the 1 entries  will occur exactly when there is a non critical cell, for which the value of the cell and one of the incident cells is the same.
}

From proposition \ref{limit} we conclude that the dimension of $\ker(\Delta_{+,\infty})$ is $c_0(f)$ and that the dimension of $\ker(\Delta_{-,\infty})$ is $c_1(f)$. Therefore, if $a$ is arbitrarily small, the energy cut-off produces a co-chain complex isomorphic to the Morse complex. This concludes the proof of Theorem \ref{Morse}.
\end{proof}
\section{Discrete gradient vector fields}
In \cite{W}, the super-symmetric interpretation of Morse inequalities is described in terms of \emph{instantons}, i.e. solutions of the differential equation
\begin{equation}\label{Instanton}
\frac{du(t)}{dt}=-\nabla f(u(t)), u(0)=q,
\end{equation}
where $q$ is a given non critical point, and with boundary conditions
\begin{eqnarray*}
&&\lim_{t\to \infty} u(t)=p\\
&&\lim_{t\to -\infty} u(t)=r,
\end{eqnarray*}
for which $p$ and $r$ are critical points of the Morse function $f$. 

The $CW$-decomposition for a well behaved Morse function (a so called Morse-Smale function, with suitable transversality conditions between the descending and ascending cells) comes equipped with an orientation, and a signed count of the number of solutions of Equation \ref{Instanton} gives the Morse differential for the Morse complex.\\
In the graph setting, there is a discrete analogue of a gradient vector field \cite{Forman}. It turns out that non critical cells always come in pairs. In order to see this, we observe that given a a non critical edge, it implies by definition that there exists an incident vertex with a nondecreasing value of the Morse function. In the same way, a non critical vertex has a adjacent edge such that the Morse value is nonincreasing. 

\begin{definition}
Let $f$ be a discrete Morse function on a graph $\Gamma$. The discrete gradient vector field of $f$, denoted by $\nabla f$, is the set of adjacent noncritical pairs $(v_{n_i
}, e_{n_i})$.
\end{definition}

Usually discrete gradient fields are represented graphically by arrows having  non critical vertices as tails and adjacent non critical edge as arrowheads, see e.g. Figure \ref{Figure 5}.
The following definition of gradient curves for a Morse function is the graph version of gradient paths given in \cite{FormanCW}:

\begin{definition}\label{curve}
A gradient curve between two  vertices  $\sigma_{\text{initial}}$ and $\sigma_{\text{final}}$ is a finite sequence
\[\gamma: \sigma_{\text{initial}}=\sigma_{0}, \tau_0, \sigma_1, \tau_1,\ldots, \tau_{k-1}, \sigma_k=\sigma_{\text{final}} \]
\end{definition}
such that the following conditions are satisfied
\begin{enumerate}
\item $\sigma_i < \tau_i$ and $\sigma_{i+1} < \tau_i$
\item $\sigma_i \neq \sigma_{i+1}$
\item $f(\sigma_i)\geq f(\tau_i)> f(\sigma_{i+1}).$
\end{enumerate}
We should interpret gradient curves as discrete solutions of Equation \ref{Instanton}.
In \cite{W}, each gradient curve has naturally equipped with a sign, so the Morse differential is obtained by counting the signed gradient curves among critical points of index differing by 1. In \cite{FormanCW}, the sign (or algebraic multiplicity) of a gradient curve is defined as follows.  Given an orientation on the vertices $\sigma _i$, the sign of a path $\gamma$, denoted by  $m(\gamma)$,  is said  to be $+1$ if the orientation on $\sigma_{\text{final}}$ agrees with the induced orientation on $\sigma_{\text{initial}}$, and is $-1$ otherwise.
Now, we can define the Morse differential $\tilde{\partial}$ from critical edges to critical vertices as follows. If $C^1, C^0$ denote the vector spaces generated by critical edges and vertices respectively, and an inner product $\langle \bullet , \bullet \rangle$ is chosen so the critical cells form an orthonormal basis, the linear operator
\begin{equation}\label{Diff}
\langle \tilde{\partial}\tau, \sigma \rangle= \sum_{\sigma_1, \tau} \langle \partial \tau, \sigma_1 \rangle \sum_{\gamma \in \Gamma(\sigma_1, \sigma)} m(\gamma)
\end{equation}
is clearly a differential, and furthermore it is the Morse differential \cite{FormanCW}.

\section{Examples}
\subsection{The simplest case} Let as consider a very simple graph, that is, a graph with two vertices and one edge: $V=\{v_1,v_2\}$ and $E=\{e_1=(v_1,v_2)\}$. 
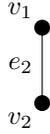
\begin{figure}[h]
\begin{center}
	\begin{tikzpicture}
    \tikzset{
    mid arrow/.style={
    decoration={markings,mark=at position 0.5 with {\arrow[scale = 2]{>}}},
    postaction={decorate},
    shorten >=0.4pt}}
	\path (0,0) coordinate (X4); \fill (X4) circle (3pt);
	\path (0,1) coordinate (X1); \fill (X1) circle (3pt);
		\node[above left] at (X1) {$v_{1}$}; 
		\node[below left] at (X4) {$v_{2}$}; 
 	\draw (X1) -- (X4) node[midway,left]{$e_{2}$};
	\end{tikzpicture}
    \end{center}
 \caption{The graph $K_2$.}
\label{Figure 3}
\end{figure}

With respect to the orientation given in Figure \ref{Figure 4} 
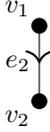
\begin{figure}[h]
\begin{center}
	\begin{tikzpicture}
    \tikzset{
    mid arrow/.style={
    decoration={markings,mark=at position 0.5 with {\arrow[scale = 2]{>}}},
    postaction={decorate},
    shorten >=0.4pt}}
	\path (0,0) coordinate (X4); \fill (X4) circle (3pt);
	\path (0,1) coordinate (X1); \fill (X1) circle (3pt);
		\node[above left] at (X1) {$v_{1}$}; 
		\node[below left] at (X4) {$v_{2}$}; 
 	\draw [mid arrow] (X1) -- (X4) node[midway,left]{$e_{2}$};
	\end{tikzpicture}
    \end{center}
    \caption{Oriented $K_2$.} 
\label{Figure 4}
\end{figure}
 

we get the following matrices:
\begin{equation}
I_{\Gamma}=
\begin{pmatrix}
 -1\\1
\end{pmatrix}, \Delta_{+,\Gamma}=\begin{pmatrix} 1&-1\\-1&1\end{pmatrix}, \Delta_{-,\Gamma}=[2].
\end{equation}
Let us consider the following Morse function $f$ on $\Gamma$:

\begin{figure}[h]
\begin{center}
	\begin{tikzpicture}
    \tikzset{
    mid arrow/.style={
    decoration={markings,mark=at position 0.5 with {\arrow[scale = 2]{>}}},
    postaction={decorate},
    shorten >=0.4pt}}
	\path (0,0) coordinate (X4); \fill (X4) circle (3pt);
	\path (0,1) coordinate (X1); \fill (X1) circle (3pt);
		\node[above left] at (X1) {$1$}; 
		\node[below left] at (X4) {$0$}; 
 	\draw (X1) -- (X4) node[midway,left]{$1$};
	\end{tikzpicture}
    \end{center}
 \caption{The Morse function $f$ on $K_2$.}
\label{Figure 5}
\end{figure}
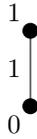
It is easy to check that $c_0(f)=1$ (the vertex $v_2$ is critical) and that $c_1(f)=0$ (there are no critical edges).
Now, the deformed boundary $d_s$ is 
\begin{equation}
d_s=I_{\Gamma,s}=(\exp{(sf)})I_{\Gamma}(\exp{(-sf)})= \begin{bmatrix}\exp(s)&0\\0&1  \end{bmatrix} \begin{bmatrix}-1\\1  \end{bmatrix} [\exp(-s)]=\begin{bmatrix} 1\\-\exp(-s) \end{bmatrix}.
\end{equation}
Therefore, the deformed even Laplacian $\Delta_{+,s}$ is
\begin{equation}
\Delta_{+,s}=d_sd_s^*= \begin{bmatrix} 1&-\exp{(-s)}\\-\exp{(-s)}&\exp{(-2s)} \end{bmatrix}
\end{equation}
and the odd Laplacian $\Delta_{-,s}$ is
\begin{equation}
\Delta_{-,s}=d_s^*d_s= [1+\exp{(-2s)}].
\end{equation}
Therefore,
\begin{equation}
\Delta_{+,\infty}= \begin{bmatrix} 1&0\\0&0 \end{bmatrix}, \Delta_{-,\infty}=[1].
\end{equation}
It can be easily checked that $\dim(\ker{(\Delta_{+,\infty})})=1=c_0(f)$ and that $\dim(\ker{(\Delta_{+,\infty})})=0=c_1(f)$. More precisely,
\begin{eqnarray*}
\ker{(\Delta_{+,\infty})}&=&\langle v_2 \rangle\\
\ker{(\Delta_{-,\infty})}&=&\langle 0 \rangle.
\end{eqnarray*}

\subsection{The triangle}
We illustrate the case in which we have two different Morse functions, one of each achieving sharpness of the Morse inequalities.
Let as consider the triangle graph $K_3$ with $V=\{v_1,v_2, v_3\}$ and $E=\{e_1=(v_1,v_2), (v_1,v_3),(v_2,v_3)\}$. \\

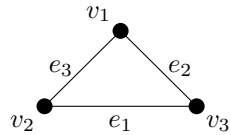
\begin{figure}[h]
\begin{center}
	\begin{tikzpicture}
    \tikzset{
    mid arrow/.style={
    decoration={markings,mark=at position 0.5 with {\arrow[scale = 2]{>}}},
    postaction={decorate},
    shorten >=0.4pt}}
	\path (0,0) coordinate (X4); \fill (X4) circle (3pt);
	\path (2,0) coordinate (X5); \fill (X5) circle (3pt);
         \path (1,1) coordinate (X1); \fill (X1) circle (3pt);
		\node[above left] at (X1) {$v_1$}; 
		\node[below left] at (X4) {$v_2$}; 
		\node[below right] at (X5) {$v_3$}; 

 	\draw (X1) -- (X4) node[midway,left]{$e_3$};
 	\draw (X1) -- (X5) node[midway,right]{$e_2$};
 	\draw (X4) -- (X5) node[midway,below]{$e_1$};
	
	\end{tikzpicture}
    \end{center}
 \caption{The triangle $K_3$.}
\label{Figure 5}
\end{figure}

\newpage
With respect to the orientation given in Figure \ref{Figure 7}

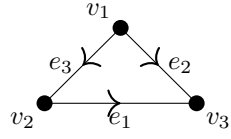
\begin{figure}[h]
\begin{center}
	\begin{tikzpicture}
    \tikzset{
    mid arrow/.style={
    decoration={markings,mark=at position 0.5 with {\arrow[scale = 2]{>}}},
    postaction={decorate},
    shorten >=0.4pt}}
	\path (0,0) coordinate (X4); \fill (X4) circle (3pt);
	\path (2,0) coordinate (X5); \fill (X5) circle (3pt);
         \path (1,1) coordinate (X1); \fill (X1) circle (3pt);
		\node[above left] at (X1) {$v_1$}; 
		\node[below left] at (X4) {$v_2$}; 
		\node[below right] at (X5) {$v_3$}; 

 	\draw  [mid arrow](X1) -- (X4) node[midway,left]{$e_3$};
 	\draw [mid arrow] (X1) -- (X5) node[midway,right]{$e_2$};
 	\draw  [mid arrow](X4) -- (X5) node[midway,below]{$e_1$};
	
	\end{tikzpicture}
    \end{center}
 \caption{An orientation on $K_3$.}
\label{Figure 7}
\end{figure}
 

we get the following incidence matrix:
\begin{equation}
I_{\Gamma}=
\begin{pmatrix}
 0&-1&-1\\
 -1&0&1\\
 1&1&0
\end{pmatrix}.
\end{equation}

Now consider the following Morse function on $K_3$:

\begin{figure}[h]
\begin{center}
	\begin{tikzpicture}
    \tikzset{
    mid arrow/.style={
    decoration={markings,mark=at position 0.5 with {\arrow[scale = 2]{>}}},
    postaction={decorate},
    shorten >=0.4pt}}
	\path (0,0) coordinate (X4); \fill (X4) circle (3pt);
	\path (2,0) coordinate (X5); \fill (X5) circle (3pt);
         \path (1,1) coordinate (X1); \fill (X1) circle (3pt);
		\node[above left] at (X1) {$1$}; 
		\node[below left] at (X4) {$0$}; 
		\node[below right] at (X5) {$1$}; 

 	\draw [mid arrow] (X1) -- (X4) node[midway,left]{$1$};
 	\draw (X1) -- (X5) node[midway,right]{$2$};
	 \draw [mid arrow] (X5) -- (X4) node[midway,above]{$1$};

	\end{tikzpicture}
    \end{center}
 \caption{The Morse function $f$ on $K_3$ and the corresponding gradient vector field $\nabla f$.}
\label{Figure 5}
\end{figure}
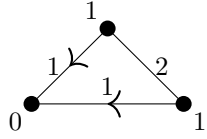

It is easy to check that $c_0(f)=1$ (the vertex $v_2$ is critical) and that $c_1(f)=1$ (the edge $e_2$ is critical).
Now, the deformed boundary $d^*_s$ is 
\begin{equation}
d_s= \begin{bmatrix} \exp(s)&0&0\\0&1&0\\0&0&\exp(s)  \end{bmatrix}\begin{bmatrix}
 0&-1&-1\\
 -1&0&1\\
 1&1&0
\end{bmatrix}\begin{bmatrix} \exp(-s)&0&0\\0&\exp(-2s)&0\\0&0&\exp(-s)  \end{bmatrix} =\begin{bmatrix} 0&-\exp(-s)&-1\\-\exp(-s)&0&\exp(-s)\\1&\exp(-s)&0 \end{bmatrix}.
\end{equation}
Therefore, the deformed even Laplacian $\Delta_{+,s}$ is
\begin{equation}
\Delta_{+,s}=d_sd_s^*= \begin{bmatrix} 1+\exp(-2s)&-\exp(-s)&-\exp(-2s)\\-\exp(-s)&2\exp(-2s)&-\exp(-s\\-\exp(-2s)&-\exp(-s)&1+\exp(-2s) \end{bmatrix}
\end{equation}
and the odd Laplacian $\Delta_{-,s}$ is
\begin{equation}
\Delta_{-,s}=d_s^*d_s= \begin{bmatrix} 1+\exp(-2s)&\exp(-s)&-\exp(-2s)\\\exp(-s)&2\exp(-2s)&-\exp(-s\\-\exp(-2s)&-\exp(-s)&1+\exp(-2s) \end{bmatrix}
\end{equation}
Therefore,
\begin{equation}
\Delta_{+,\infty}= \begin{bmatrix} 1&0&0\\0&0&0\\0&0&1 \end{bmatrix}= \Delta_{-,\infty}.
\end{equation}
It can be easily checked that $\dim(\ker{(\Delta_{+,\infty})})=1=c_0(f)=c_1(f)=\dim(\ker{(\Delta_{-,\infty})})$ and that 
\begin{eqnarray*}
\ker{(\Delta_{+,\infty})}&=&\langle v_2 \rangle\\
\ker{(\Delta_{-,\infty})}&=&\langle e_2 \rangle.
\end{eqnarray*}
As expected, this Morse function achieves the equality for the Morse inequalities.\\

On the other hand, we might have considered the  Morse function on $K_3$ given by figure \ref{M2}. For this function, all the vertices and edges are critical, thus $c_0(g)=c_1(g)=3$.

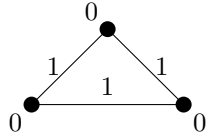
\begin{figure}[h]
\begin{center}
	\begin{tikzpicture}
    \tikzset{
    mid arrow/.style={
    decoration={markings,mark=at position 0.5 with {\arrow[scale = 2]{>}}},
    postaction={decorate},
    shorten >=0.4pt}}
	\path (0,0) coordinate (X4); \fill (X4) circle (3pt);
	\path (2,0) coordinate (X5); \fill (X5) circle (3pt);
         \path (1,1) coordinate (X1); \fill (X1) circle (3pt);
		\node[above left] at (X1) {$0$}; 
		\node[below left] at (X4) {$0$}; 
		\node[below right] at (X5) {$0$}; 

 	\draw (X1) -- (X4) node[midway,left]{$1$};
 	\draw (X1) -- (X5) node[midway,right]{$1$};
	 \draw (X4) -- (X5) node[midway,above]{$1$};

	\end{tikzpicture}
    \end{center}
 \caption{The Morse function $g$ on $K_3$.}
\label{M2}
\end{figure}

Now, the deformed boundary $d^*_s$ is 
\begin{equation}
d^*_s= \begin{bmatrix} 1&0&0\\0&1&0\\0&0&1  \end{bmatrix}\begin{bmatrix}
 0&-1&-1\\
 -1&0&1\\
 1&1&0
\end{bmatrix}\begin{bmatrix} \exp(-s)&0&0\\0&\exp(-s)&0\\0&0&\exp(-s)  \end{bmatrix} =\begin{bmatrix} 0&-\exp(-s)&-\exp(-s)\\-\exp(-s)&0&\exp(-s)\\\exp(-s)&\exp(-s)&0 \end{bmatrix}.
\end{equation}
Therefore, the deformed even Laplacian $\Delta_{+,s}$ is
\begin{equation}
\Delta_{+,s}=d_s^*d_s= \begin{bmatrix} 2\exp(-2s)&-\exp(-2s)&-\exp(-2s)\\-\exp(-2s)&2\exp(-2s)&-\exp(-2s)\\-\exp(-2s)&-\exp(-2s)&2\exp(-2s) \end{bmatrix}
\end{equation}
and the odd Laplacian $\Delta_{-,s}$ is
\begin{equation}
\Delta_{+,s}=d_s^*d_s= \begin{bmatrix} 2\exp(-2s)&\exp(-2s)&-\exp(-2s)\\\exp(-2s)&2\exp(-2s)&-\exp(-2s)\\-\exp(-2s)&-\exp(-2s)&2\exp(-2s) \end{bmatrix}.
\end{equation}
Therefore,
\begin{equation}
\Delta_{+,\infty}= \begin{bmatrix} 0&0&0\\0&0&0\\0&0&0 \end{bmatrix}= \Delta_{-,\infty},
\end{equation}
thus
\begin{eqnarray*}
\ker{(\Delta_{+,\infty})}&=&\langle v_1, v_2, v_3 \rangle\\
\ker{(\Delta_{-,\infty})}&=&\langle e_1, e_2, e_3 \rangle.
\end{eqnarray*}

\section{Morse function of a tree and the boundary map}\label{boundary}

Given a vertex $v$ in $\Gamma$, let $v'$ be the critical vertex obtained by flowing along the gradient field $\nabla f$ of $f$. For an edge $e$ in $\Gamma$ with endpoints $v_0$ and $v_1$, the Morse boundary map, defined on critical edges, is given by

\begin{equation}
	e \mapsto v_1' - v_0'.
\end{equation}

Let $T$ be a spanning tree of $\Gamma$ and $v_r$ a vertex in $T$, the \textit{root}. Then it is an easy observation that the height function $h$ defined by:
\begin{eqnarray*}
	h(v) &=& \text{ edge distance from $v_r$}\\
	h(e)&=& \text{max}(h(v_0),h(v_1)) \text{\hspace{1mm}if $e$ belongs to $T$}\\
	h(e)&=& \text{max}(h(v_0),h(v_1))+1 \text{\hspace{1mm}otherwise},
\end{eqnarray*}
is Morse.

The following proposition justifies the fact that we recover the Morse complex for such functions.

\begin{proposition}
The boundary map is zero on critical edges of $h$.
\end{proposition}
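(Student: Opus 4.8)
The plan is to make the critical cells of $h$ fully explicit, to describe the gradient flow on the spanning tree $T$, and then to observe that the two endpoints of any critical edge flow to the \emph{same} critical vertex, so that the boundary map vanishes term by term.

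First I would pin down the critical vertices. In the rooted tree $T$, every vertex $v\neq v_r$ has a unique parent, joined to $v$ by a tree edge $e_p$ whose endpoints have heights $h(v)$ and $h(v)-1$; hence $h(e_p)=\max(h(v),h(v)-1)=h(v)\leq h(v)$, so $e_p$ witnesses condition (\ref{MC1}) and $v$ is not critical. Every other edge incident to $v$ has strictly larger height — a tree edge to a child has height $h(v)+1$, and a non-tree edge has height $\geq h(v)+1$ — so $e_p$ is in fact the \emph{unique} incident edge of height $\leq h(v)$. At the root there is no parent edge and every incident edge has height $>0=h(v_r)$, so $v_r$ is the unique critical vertex.

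Next I would identify the critical edges; since edges are maximal, only condition (\ref{MC2}) is relevant. For a tree edge $e=(v_0,v_1)$ the endpoint of larger height, say $v_1$, satisfies $h(v_1)=h(e)$, so (\ref{MC2}) is witnessed and $e$ is not critical. For a non-tree edge $h(e)=\max(h(v_0),h(v_1))+1$ strictly exceeds both endpoint heights, so the relevant set is empty and $e$ is critical. Thus the critical edges are precisely the edges of $\Gamma$ not in $T$.

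Finally I would run the gradient flow and conclude. By the first step, from any non-root vertex $v$ the unique edge admissible for a step of a gradient curve (Definition \ref{curve}) is the parent edge $e_p$, whose opposite endpoint is the parent, of height $h(v)-1<h(e_p)=h(v)$; this satisfies the three conditions of Definition \ref{curve} and lowers $h$ by exactly one. As the admissible edge is unique the flow is deterministic, and as $h$ takes nonnegative integer values it reaches $v_r$ after $h(v)$ steps and halts, $v_r$ being critical. Hence $v'=v_r$ for every vertex $v$, so for any critical edge $e=(v_0,v_1)$ we get $v_0'=v_1'=v_r$ and $e\mapsto v_1'-v_0'=v_r-v_r=0$. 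The only point needing care is the determinacy of the flow — the uniqueness of the admissible edge at each non-root vertex established in the first step — since this is exactly what drives the flow from both endpoints of a critical edge into the single critical vertex $v_r$; granting that, the vanishing of the boundary map is purely formal.
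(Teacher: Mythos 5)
Your proof is correct and follows essentially the same route as the paper's: identify the critical cells of $h$ as the root together with the edges not in $T$, and observe that the gradient flow carries every vertex to the root, so $v_0'=v_1'=v_r$ and the boundary map vanishes on critical edges. The paper asserts these two facts in one line without justification, so your argument (in particular the uniqueness of the admissible edge at each non-root vertex, which makes the flow deterministic and guarantees it terminates at $v_r$) simply supplies the details the paper leaves implicit.
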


\begin{proof}
The critical cells of $h$ consist of $v_0$ and all the edges not contained in $T$, and the boundary map is zero on critical edges since $v_0'=v_1'$.  Therefore, the boundary map coincides with the operator $\tilde{\partial }$ from Definition \ref{Diff}, and this implies that the boundary maps compute graph homology, see e.g \cite{FormanCW}.
\end{proof}

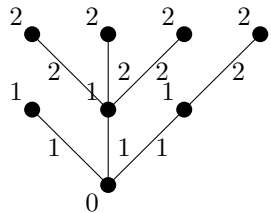
\begin{figure}[h]
\begin{center}
	\begin{tikzpicture}
    \tikzset{
    mid arrow/.style={
    decoration={markings,mark=at position 0.5 with {\arrow[scale = 2]{>}}},
    postaction={decorate},
    shorten >=0.4pt}}
	\path (1,0) coordinate (X1); \fill (X1) circle (3pt);
	\path (0,1) coordinate (X2); \fill (X2) circle (3pt);
	\path (1,1) coordinate (X3); \fill (X3) circle (3pt);
	\path (2,1) coordinate (X4); \fill (X4) circle (3pt);
	\path (0,2) coordinate (X5); \fill (X5) circle (3pt);
	\path (1,2) coordinate (X6); \fill (X6) circle (3pt);
	\path (2,2) coordinate (X7); \fill (X7) circle (3pt);
	\path (3,2) coordinate (X8); \fill (X8) circle (3pt);
		\node[below left] at (X1) {$0$}; 
		\node[above left] at (X4) {$1$}; 
		\node[above left] at (X5) {$2$}; 
		\node[above left] at (X2) {$1$}; 
		\node[above left] at (X3) {$1$}; 
		\node[above left] at (X6) {$2$}; 
		\node[above left] at (X7) {$2$}; 
		\node[above left] at (X8) {$2$}; 

 	\draw (X1) -- (X2) node[midway,left]{$1$};
 	\draw (X1) -- (X3) node[midway,right]{$1$};
	 \draw (X1) -- (X4) node[midway,right]{$1$};
	 \draw (X3) -- (X5) node[midway,left]{$2$};
         \draw (X3) -- (X6) node[midway,right]{$2$};
 	 \draw (X3) -- (X7) node[midway,right]{$2$};
	 \draw (X4) -- (X8) node[midway,right]{$2$};

	\end{tikzpicture}
    \end{center}
 \caption{The height function for a rooted tree. Note that the only critical cell of a rooted tree is the root, therefore there are no gradient curves and thus the Morse complex has trivially Morse homology with respect to the differential $\tilde \partial$.}
\label{M2}
\end{figure}

\section{Conclusion and perspectives}
We have re-proven Morse inequalities in the particular case of finite graphs, by using Witten's supersymmetric approach for quantum mechanics on Riemannian manifolds. The equality is achieved in both examples by a \emph{height type} Morse function, which can be defined for a spanning  tree in terms of the \emph{depth} of the tree, once a root is chosen. The remaining values of the Morse function can be chosen to be larger than the maximum of the corresponding edges, so the Morse conditions (\ref{MC1}) and (\ref{MC2}) are  satisfied. We conjecture that the sharpness of the equation is achieved by such functions in the general case of CW-complexes, for which there is a generalized notion of  a spanning tree and corresponding height function This will be part of an upcoming publication. We also intend to describe in detail how to use Witten's approach to derive Morse inequalities and define a discrete version of quantum mechanics in interesting higher dimensional examples such as real projective spaces and complexes of graphs with a monotone decreasing property \cite {Forman}.

\section*{Acknowledgements}
This research was conducted within the Illinois Geometry Lab (IGL) project \emph{Quantum Mechanics for Graphs and CW-Complexes}. 

I. C. thanks P. Mn\"ev for interesting discussions and for explaining the graph version of quantum mechanics and A. Cattaneo for useful comments on a preliminary version of the manuscript. I.C. and B.X thank Sarah Loeb, Michael Toriyama, Chengzheng Yu and Zhe Hu for useful discussions.

\newpage
\appendix
\section{Linear algebra and the graph Laplacian}\label{App:lin}
The following are technical basic lemmas in linear algebra used throughout the paper, and they can be found in standard references for matrix linear algebra, such as \cite{Horn}.
\begin{lemma}\label{kernel}
Let $A$ be a matrix and let $A^*$ its adjoint. Then $\ker(A)=\ker(A^*A)$. 
\end{lemma}
\begin{proof}
It is clear that $\ker A\subseteq \ker(A^*A)$. For the other direction, if $\langle  \cdot , \cdot \rangle$ is the corresponding inner product, then, for each vector $v$ in $\ker (A^*A)$
\[ \vert \vert Av \vert\vert ^2=\langle Av, Av \rangle =\langle A^*Av, v \rangle= \langle 0, v \rangle = 0,\]
thus $v \in \ker (A)$.
\end{proof}

\begin{lemma} \label{spectrum}
The matrices $AA^*$ and $A^*A$ are both non-negative definite and their spectra coincide (modulo multiplicities).
\end{lemma}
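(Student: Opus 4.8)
The plan is to treat the two assertions separately, since non-negative definiteness is an immediate inner-product computation while the coincidence of spectra requires exhibiting a dimension-preserving correspondence between eigenspaces.

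First I would establish non-negative definiteness. Both products are self-adjoint, since $(A^*A)^* = A^*A$ and $(AA^*)^* = AA^*$, so their eigenvalues are automatically real. For any vector $v$ in the domain of $A^*A$ one computes $\langle A^*A v, v \rangle = \langle Av, Av \rangle = \|Av\|^2 \geq 0$, and symmetrically $\langle AA^* w, w \rangle = \|A^* w\|^2 \geq 0$ for any $w$. Hence both operators are non-negative definite, and in particular all of their eigenvalues are real and $\geq 0$.

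Next, for the spectra I would show that the map $v \mapsto Av$ carries the $\lambda$-eigenspace of $A^*A$ into the $\lambda$-eigenspace of $AA^*$ whenever $\lambda \neq 0$. Indeed, if $A^*A v = \lambda v$, then $AA^*(Av) = A(A^*A v) = \lambda (Av)$, so $Av$ is a $\lambda$-eigenvector of $AA^*$, provided $Av \neq 0$; and this holds because $Av = 0$ would force $\lambda v = A^*A v = 0$, hence $v = 0$ since $\lambda \neq 0$. The key point, and the step I expect to require the most care, is upgrading this correspondence of eigenvalues into an equality of multiplicities. I would argue that the restriction of $v \mapsto Av$ to $\ker(A^*A - \lambda)$ is injective (by the same observation that $Av = 0$ forces $v = 0$ when $\lambda \neq 0$), and that the symmetric map $w \mapsto A^* w$ furnishes an injection in the reverse direction. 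Two injections between finite-dimensional spaces force equality of dimensions, so the $\lambda$-eigenspaces of $A^*A$ and $AA^*$ have the same dimension for every $\lambda \neq 0$.

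Finally I would remark that the multiplicity of the eigenvalue $0$ need not agree when $A$ is rectangular: by Lemma \ref{kernel} one has $\dim \ker(A^*A) = \dim \ker A$ and $\dim \ker(AA^*) = \dim \ker A^*$, and these can differ. This discrepancy at $0$ is precisely why the statement is phrased \emph{modulo multiplicities}, the nonzero part of the spectrum being the portion that matches exactly, with multiplicities.
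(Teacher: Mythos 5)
Your proof is correct, and the comparison here is somewhat degenerate: the paper does not prove this lemma at all. Lemma \ref{spectrum} sits in Appendix \ref{App:lin} with only the blanket remark that such facts ``can be found in standard references for matrix linear algebra, such as \cite{Horn}.'' So your argument supplies a self-contained proof of something the paper delegates to the literature. Your route is the standard and correct one: self-adjointness together with $\langle A^*Av,v\rangle=\Vert Av\Vert^2\geq 0$ gives non-negative definiteness, and the intertwining maps $v\mapsto Av$ and $w\mapsto A^*w$ give injections between the $\lambda$-eigenspaces of $A^*A$ and $AA^*$ for each $\lambda\neq 0$, forcing equal dimensions. A particular merit of your write-up is the closing remark about the eigenvalue $0$: this is precisely where a naive reading of ``the spectra coincide'' would be \emph{false}. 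In the paper's application $A$ is the incidence matrix $I_\Gamma$, which is $\vert V\vert\times\vert E\vert$ and generally rectangular, so $0$ can be an eigenvalue of one product and not the other (e.g.\ a tree, where $\Delta_-=I^*I$ is invertible but $\Delta_+=II^*$ has kernel of dimension $1$); your use of Lemma \ref{kernel} to identify $\ker(A^*A)=\ker A$ and $\ker(AA^*)=\ker A^*$ pins down exactly how the multiplicities at $0$ differ, and makes explicit that the matching ``modulo multiplicities'' is really an exact matching, with multiplicities, of the nonzero spectra. This sharper statement is the one that actually matters for the supersymmetric pairing of nonzero-energy states of $\Delta_+$ and $\Delta_-$ underlying the energy cut-off argument.
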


\begin{lemma} \label{conjugacy}
Let $A$ be a $p \times q$ matrix,  let $X$ be an invertible $p \times p$- matrix and let $Y$ be an invertible $q \times q$-matrix. Then \[\ker (A)= \ker (XAY).\]
\end{lemma}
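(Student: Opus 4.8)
The plan is to attack the set equality $\ker(A)=\ker(XAY)$ directly by a double-inclusion argument, peeling off the two invertible factors one at a time to see exactly what each contributes at the level of sets. The left factor is harmless: because $X$ is invertible, $XAYv=0$ holds if and only if $AYv=0$, so $\ker(XAY)=\ker(AY)$ with no further hypothesis. This reduces the entire question to comparing $\ker(AY)$ with $\ker(A)$, i.e. to understanding how pre-composition of $A$ with the invertible right factor $Y$ acts on the solution set.

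Carrying out that comparison is where the content lies, and it is also the main obstacle. Unwinding the definition gives the exact set-level identity $\ker(AY)=\{v : Yv\in\ker A\}=Y^{-1}(\ker A)$, the $Y$-preimage of $\ker A$ rather than $\ker A$ itself. The clean structural fact I would then prove is that $Y$ restricts to a linear isomorphism $\ker(XAY)\xrightarrow{\sim}\ker A$: it is injective because $Y$ is injective, and surjective because for any $w\in\ker A$ the vector $Y^{-1}w$ lies in $\ker(AY)$ and is sent to $w$. Hence the two kernels are canonically isomorphic via $Y$, so $\dim\ker(XAY)=\dim\ker A$; combined with rank--nullity over the common domain $\mathbb{C}^q$, this also forces $\dim\coker(XAY)=\dim\coker A$, which is precisely what the deformed energy cut-off proposition consumes.

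The honest point to record is that the isomorphism $Y^{-1}(\ker A)\cong\ker A$ is not in general the identity map, so the obstacle here is not a computation but the correct formulation of the conclusion. The literal set equality $\ker(A)=\ker(XAY)$ holds exactly when $Y$ stabilizes $\ker A$, that is $Y(\ker A)=\ker A$ (for instance whenever $Y$ is a scalar multiple of the identity, or is block-diagonal with respect to a decomposition adapted to $\ker A$); in the deformed setting where $Y=\exp(-fs)$ scales edge coordinates unequally, this stabilization typically fails. I would therefore prove the statement in its true set-theoretic form $\ker(XAY)=Y^{-1}(\ker A)$, exhibit the canonical isomorphism onto $\ker A$ given by $Y$, and flag that the as-worded equality needs the extra hypothesis that $Y$ preserve $\ker A$ -- which is exactly the reason the downstream argument may safely invoke equality of dimensions, but not equality of the kernels themselves.
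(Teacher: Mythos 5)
Your proof is correct, and it catches something the paper does not: the paper offers no proof of Lemma \ref{conjugacy} at all (it is filed among ``technical basic lemmas'' with a pointer to standard references), and the lemma as stated there is in fact false. As you observe, invertibility of $X$ gives $\ker(XAY)=\ker(AY)$ for free, but the right factor only yields $\ker(AY)=Y^{-1}(\ker A)$, which coincides with $\ker A$ precisely when $Y(\ker A)=\ker A$. A minimal counterexample: $p=q=2$, $X=I$, $A=\begin{pmatrix}1&0\\0&0\end{pmatrix}$, $Y=\begin{pmatrix}0&1\\1&0\end{pmatrix}$, where $\ker A$ is spanned by $(0,1)^{t}$ but $\ker(XAY)$ by $(1,0)^{t}$. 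Your repaired formulation --- $Y$ restricts to a linear isomorphism $\ker(XAY)\to\ker A$, hence equality of nullities and, via rank--nullity, of cokernel dimensions --- is exactly what the downstream argument can safely consume: in the proof of the deformed energy cut-off the right factor is $\exp(-fs)$, a non-scalar diagonal matrix, so literal equality of kernels genuinely fails there, but the conclusion $H^{\bullet}_{s,a}(\Gamma)\cong H^{\bullet}(\Gamma)$ only needs the dimension count (equivalently, conjugation of the whole complex by the invertible diagonal matrices is an isomorphism of complexes). So where the paper silently treats the lemma as standard, your version both proves the correct statement $\ker(XAY)=Y^{-1}(\ker A)$ and identifies the extra hypothesis ($Y$ stabilizes $\ker A$) under which the equality as worded would actually hold.
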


\newpage
\addcontentsline{toc}{section}{\bibname}

\end{document}